\documentclass[letterpaper,10pt,conference]{IEEEtran}
\usepackage{amsmath,amssymb,amsthm}
\usepackage{graphicx}
\usepackage{hyperref}

\newtheorem{theorem}{Theorem}
\newtheorem{lemma}{Lemma}
\newtheorem{proposition}{Proposition}

\newtheorem{definition}{Definition}
\newtheorem{remark}{Remark}

\newcommand{\KL}{\operatorname{KL}}
\newcommand{\EE}{\mathbb{E}}
\newcommand{\Var}{\operatorname{Var}}
\newcommand{\Cov}{\operatorname{Cov}}
\newcommand{\RR}{\mathbb{R}}
\newcommand{\cN}{\mathcal{N}}
\newcommand{\cJ}{\mathcal{J}}

\begin{document}
	\title{The Risk-Sensitive Schr\"odinger Bridge\\ Is Not a KL Projection}
	\author{Hamidreza Behjoo  \IEEEmembership{}
		\thanks{H.~Behjoo is with the Institute of Systems Medicine,
			Chinese Academy of Medical Sciences, Suzhou 215123, China
			(e-mail: hamidreza.behjoo@gmail.com).} }
	
	\maketitle
	\begin{abstract}
		The Schr\"odinger bridge owes its computational power to a single structural
		fact: by Girsanov's theorem the controlled problem is a Kullback--Leibler (KL)
		projection onto a fixed reference measure, solvable by alternating projections.
		This letter shows that the fact does not survive risk sensitivity. When the
		expected path cost is replaced by the entropic risk measure and both endpoint
		marginals are kept as hard constraints, the resulting fixed-point
		bridge value $J_\theta$ (the soft-problem value at the multiplier that enforces the terminal constraint) admits no representation as a constrained KL minimum against any fixed path-space
		reference with a regular endpoint law (a class strictly larger than the
		uniformly elliptic diffusion references: no Markov property is required),
		even allowing an additive
		normalisation depending on the initial marginal. Moreover, no single
		reference generates the one-parameter family in the risk parameter. The obstruction is
		computed in closed form: the Gaussian bridge value violates, by exactly
		$\theta/2$, a heat equation that any Gaussian smoothing of a fixed endpoint
		density must obey. In place of the projection, the theory rests on a
		terminal-multiplier fixed point and an asymmetric factorisation penalising the score energy
		of the backward factor.
	\end{abstract}
	
	\begin{IEEEkeywords}
		Schr\"odinger bridge, risk-sensitive control, entropic risk measure, KL
		duality.
	\end{IEEEkeywords}
	
	\section{Introduction}
	\label{sec:intro}
	
	The Schr\"odinger bridge problem lies at the intersection of stochastic
	control, optimal transport, and generative
	modelling~\cite{Leonard2014,chen2021liaisons}. Its
	computational power rests on a single structural fact: by Girsanov's theorem
	the controlled problem is a Kullback--Leibler (KL) projection onto a fixed
	reference measure, and can therefore be solved by alternating projections or
	Sinkhorn iteration in the static formulation. This letter shows that the projection
	breaks when the expected path cost is replaced by the entropic risk measure
	$\frac1\theta\log\EE\,e^{\theta\,\mathrm{cost}}$, $\theta\in(0,1)$, with
	both endpoint marginals kept as hard constraints. The proof computes the
	obstruction explicitly: the Gaussian bridge value violates, by exactly
	$\theta/2$, a heat equation that the endpoint law of any regular reference
	must obey (Section~\ref{sec:no-kl}).
	
	Risk sensitivity alone would destroy nothing: for a fixed path cost
	the entropic objective differs from a linear functional of the path measure
	only by a monotone transformation, so the constrained problem remains a linear
	program with the usual Schr\"odinger multiplier structure. The obstruction
	here is that the kinetic cost is a functional of the controlled law itself rather than a fixed path functional; jointly with the hard endpoint constraints, this destroys the KL-projection structure on which the classical
	and computational theory rests (Section~\ref{sec:no-kl}). Two values must be
	kept distinct throughout: the {fixed-point bridge value} $J_\theta$ of
	\eqref{eq:bridge-value}, developed here, and the {hard-constrained
		risk-sensitive value} $\cJ_\theta$ of \eqref{eq:hard-rs}, a different object
	for $\theta>0$. The two coincide at $\theta=0$ and separate at first order by
	an explicit variance--covariance gap \eqref{eq:first-order-gap}.
	
	\paragraph*{Contributions}
	(i) a formulation of the risk-sensitive bridge in which the hard terminal
	constraint is enforced by a terminal-multiplier fixed point, with a
	gauge-invariant bridge value $J_\theta$ (Section~\ref{sec:rsb});
	(ii) a three-step impossibility proof (the per-$\theta$ KL representation is circular, no reference is uniform in $\theta$, and no fixed endpoint-regular reference represents $J_\theta$ at all), with the obstruction computed
	explicitly as a $\theta/2$ heat-equation defect (Section~\ref{sec:no-kl});
	(iii) a closed-form Riccati reduction for Gaussian targets to decoupled scalar
	shooting equations, proving well-posedness of the fixed point in that setting
	(Section~\ref{sec:gaussian}); and
	(iv) the structure that replaces the projection: an asymmetric factorisation
	$p^*=\varphi\,\psi^{1/(1-\theta)}$ penalising the score energy of the backward
	factor (Section~\ref{sec:replacement}).
	
	\paragraph*{Related work}
	Exponential-of-integral cost criteria go back to~\cite{jacobson1973} and were
	developed in generality by~\cite{whittle1990,flemingmceneaney1995};
	the equivalence with a minimax game against an entropy-penalised adversary is
	central to robust control~\cite{basar1995}, and
	risk-sensitive extensions within linearly solvable and path-integral control
	appear in~\cite{vandenbroek2010}. All of these consider
	unconstrained problems: the terminal law is free, and the two hard
	marginal constraints responsible for our obstruction are absent. Pricing the
	log-likelihood rather than the kinetic energy~\cite{ito2024renyi} yields a
	R\'enyi divergence: a convex program, but again with a free terminal law. Existing ``robust'' bridge variants relax the hard terminal
	constraint~\cite{chen2019relaxed} or address parameter perturbations; none
	modifies the path-cost objective, and all retain the KL-projection structure. To the best of our
	knowledge this is the first formulation of the Schr\"odinger bridge under the
	entropic risk measure with hard endpoint constraints, and the first
	characterisation of the resulting breakdown of KL/I-projection duality.
	
	\section{Preliminaries: the standard bridge}
	\label{sec:prelim}
	
	Throughout, $\alpha:=1-\theta$, $\cN(m,\Sigma)$ denotes the Gaussian law, and
	$\langle f,\mu\rangle:=\int f\,d\mu$. Fix the horizon $[0,1]$ and consider the controlled diffusion in $\RR^d$
	\begin{equation}
		\label{eq:sde}
		dX_t = \bigl(f_t(X_t)+u_t(X_t)\bigr)\,dt+\sqrt{\kappa_t}\,dW_t,\qquad X_0\sim\mu_0,
	\end{equation}
	with given drift $f_t$, scalar $\kappa_t>0$, Markov control $u$, and $P^u$ the
	path law on $C([0,1];\RR^d)$; $P^0$ is the uncontrolled reference. The
	standard Schr\"odinger bridge~\cite{Leonard2014} is
	\begin{equation}
		\label{eq:sb}
		\inf_{u}\;\EE^{P^u}\!\left[\int_0^1\!\left(\frac{1}{2\kappa_t}\|u_t(X_t)\|^2+V_t(X_t)\right)dt\right]
		\quad\text{s.t.}\;X_1\sim\mu_1.
	\end{equation}
	By Girsanov's theorem the objective equals $\KL(P^u\|P^V)$ up to an additive
	constant, where $dP^V/dP^0\propto\exp(-\int_0^1 V_t(X_t)\,dt)$; the bridge is
	therefore the static problem
	$\inf_{Q:\,Q_0=\mu_0,\,Q_1=\mu_1}\KL(Q\|P^V)$, an I-projection onto a fixed
	reference. The value function $S_0$ of the soft problem associated with \eqref{eq:sb}, the $\theta=0$ member of the risk-sensitive family introduced in Section~\ref{sec:rsb}, satisfies
	\begin{equation}
		\label{eq:hjb-std}
		\begin{split}
			-\partial_t S_0 &= V_t+f_t\cdot\nabla S_0+\frac{\kappa_t}{2}\Delta S_0-\frac{\kappa_t}{2}\|\nabla S_0\|^2,\\
			S_0(1,x)&=\lambda(x),
		\end{split}
	\end{equation}
	with optimal feedback $u^*=-\kappa_t\nabla S_0$, and the Cole--Hopf transform
	$S_0=-\log\psi$ linearises \eqref{eq:hjb-std} to
	\begin{equation}
		\label{eq:psi-bwd}
		-\partial_t\psi+V\psi = f_t\cdot\nabla\psi+\frac{\kappa_t}{2}\Delta\psi,
		\qquad \psi(1,x)=e^{-\lambda(x)}.
	\end{equation}
	With the forward factor $\varphi$ solving
	\begin{equation}
		\label{eq:phi-fwd}
		\partial_t\varphi = -\nabla\cdot(f_t\varphi)+\frac{\kappa_t}{2}\Delta\varphi-V\varphi,
		\qquad \varphi(0,\cdot)=\frac{\mu_0}{\psi(0,\cdot)},
	\end{equation}
	the optimally controlled density admits the symmetric factorisation
	$p^*=\varphi\psi$. We work formally with classical positive solutions of the
	forward--backward system under standard parabolic regularity; the Gaussian
	case of Section~\ref{sec:gaussian} is fully self-contained.
	
	\section{The risk-sensitive bridge}
	\label{sec:rsb}
	
	Replace the linear expectation in \eqref{eq:sb} by the cumulant-generating
	function. For a prescribed terminal cost $\lambda$, the {soft}
	risk-sensitive value is
	\begin{equation}
		\label{eq:rs-obj}
		\begin{split}
			S_\theta(\lambda):=\inf_{u}\frac{1}{\theta}
			\log\EE^{P^u}\!\Bigl[\exp\Bigl(&\theta\!\int_0^1\!\Bigl(\tfrac{\|u_t\|^2}{2\kappa_t}+V_t(X_t)\Bigr)dt\\
			&+\theta\lambda(X_1)\Bigr)\Bigr],
		\end{split}
	\end{equation}
	with only the initial constraint imposed; at $\theta=0$ this is the classical
	soft value, and with the terminal constraint restored the bridge value
	\eqref{eq:bridge-value} recovers \eqref{eq:sb} exactly. The {bridge} problem imposes in addition the hard terminal
	constraint $X_1\sim\mu_1$, and $\lambda$ is then no longer a datum: it becomes a
	{terminal multiplier}, i.e.\ a shooting parameter, enforcing the constraint.
	
	\subsection{Gauge and the bridge value}
	The soft value is exactly shift-equivariant,
	$S_\theta(\lambda+c)=S_\theta(\lambda)+c$ for $c\in\RR$, while the terminal coupling
	\eqref{eq:Psi1} below is blind to constants; the multiplier is determined only
	up to the coset $\lambda^*+\RR$, and the optimally controlled process is
	untouched by the constant. The gauge-invariant combination is the difference
	\begin{equation}
		\label{eq:bridge-value}
		J_\theta(\mu_0\to\mu_1):=S_\theta(\lambda^*)-\langle\lambda^*,\mu_1\rangle,
	\end{equation}
	with $\lambda^*$ any representative of the coset.
	
	\begin{definition}[Bridge value]
		\label{def:bridge-value}
		The {risk-sensitive bridge value} is the gauge-invariant fixed-point
		value \eqref{eq:bridge-value}. For computations we fix the representative by
		the gauge $\langle\lambda^*,\mu_1\rangle=0$, under which
		$J_\theta=S_\theta(\lambda^*)$; the definition does not depend on the gauge.
	\end{definition}
	
	The definition is canonical in two senses. At $\theta=0$ the terminal
	coupling becomes the classical Schr\"odinger system and
	\eqref{eq:bridge-value} is precisely the Legendre identity
	$J_0=\sup_{\lambda}\{S_0(\lambda)-\langle\lambda,\mu_1\rangle\}$: the classical bridge
	value is recovered with no arbitrary choices. Second, the fixed point is
	well-posed at least in the Gaussian setting, by strict monotonicity of the
	scalar shooting map (Remark~\ref{rem:shoot-wp}). For $\theta>0$ the Legendre
	identity fails: the first variation of
	$\lambda\mapsto S_\theta(\lambda)-\langle\lambda,\mu_1\rangle$ at the fixed point is
	$q_1^\theta-\mu_1$, with $q_1^\theta$ a {tilted} terminal law, so the fixed point is not a saddle point of a Lagrangian.
	
	\subsection{The hard value and the first-order gap}
	For $\theta>0$, \eqref{eq:bridge-value} is not the value of the
	hard-constrained risk-sensitive control problem
	\begin{equation}
		\label{eq:hard-rs}
		\begin{split}
			\cJ_\theta(\mu_0\to\mu_1)
			:=\inf_{u:\,X_0\sim\mu_0,\,X_1\sim\mu_1}&\frac1\theta\log
			\EE^{u}\!\bigl[e^{\theta K^u}\bigr],\\
			K^u&=\int_0^1\Bigl(\tfrac{\|u_t\|^2}{2\kappa_t}+V_t\Bigr)dt .
		\end{split}
	\end{equation}
	The two coincide at $\theta=0$ and separate at first order: writing $u_0^*$
	for the classical bridge control and $\lambda_0^*$ for the classical multiplier,
	a cumulant expansion with an envelope argument gives
	\begin{equation}
		\label{eq:first-order-gap}
		J_\theta-\cJ_\theta
		=\theta\Bigl(\tfrac12\Var^{u_0^*}\!\bigl(\lambda_0^*(X_1)\bigr)
		+\Cov^{u_0^*}\!\bigl(K,\,\lambda_0^*(X_1)\bigr)\Bigr)+o(\theta),
	\end{equation}
	generically nonzero, of either sign. The same expansion applied to
	$J_\theta$ alone gives
	\begin{equation}
		\label{eq:first-order-value}
		J_\theta=J_0+\frac{\theta}{2}\,
		\Var^{u_0^*}\!\bigl(K+\lambda_0^*(X_1)\bigr)+o(\theta),
	\end{equation}
	and \eqref{eq:first-order-gap} is its difference with the companion expansion
	of $\cJ_\theta$. The two problems differ because the kinetic cost is a
	functional of the controlled law, not a fixed path functional: in
	\eqref{eq:hard-rs} the constraint is on the law of $X_1$, so its KKT
	multiplier is not absorbed into the path cost and acts outside the entropic
	exponential, and \eqref{eq:hard-rs} is not a linear program over path
	measures. The fixed-point construction \eqref{eq:bridge-value} instead keeps
	the terminal potential inside the exponential, which is precisely what makes
	the generalised Cole--Hopf transform below available.
	
	\subsection{HJB equation and generalised Cole--Hopf}
	By dynamic programming, the state-conditioned soft value $S_\theta(t,x)$
	satisfies
	\begin{equation}
		\label{eq:hjb-rs}
		\begin{split}
			-\partial_t S_\theta &= V_t+f_t\cdot\nabla S_\theta+\frac{\kappa_t}{2}\Delta S_\theta
			-\frac{\kappa_t}{2}(1-\theta)\|\nabla S_\theta\|^2,\\
			S_\theta(1,x)&=\lambda(x).
		\end{split}
	\end{equation}
	The generalised Cole--Hopf transform
	\begin{equation}
		\label{eq:ch-rs}
		\psi_t(x):=\exp\!\bigl(-(1-\theta)S_\theta(t,x)\bigr)
		\;\Longleftrightarrow\;
		S_\theta(t,x)=-\frac{\log\psi_t(x)}{1-\theta}
	\end{equation}
	linearises \eqref{eq:hjb-rs} to the backward PDE
	\begin{equation}
		\label{eq:psi-rs}
		\begin{split}
			-\partial_t\psi_t+(1-\theta)V_t\psi_t
			&= f_t\cdot\nabla\psi_t+\frac{\kappa_t}{2}\Delta\psi_t,\\
			\psi(1,x)&=e^{-(1-\theta)\lambda(x)},
		\end{split}
	\end{equation}
	whose generator is $\theta$-independent. The optimal feedback is
	\begin{equation}
		\label{eq:u-star}
		u_t^*(x)=\frac{\kappa_t}{1-\theta}\nabla\log\psi_t(x).
	\end{equation}
	
	\subsection{Asymmetric factorisation}
	Under \eqref{eq:u-star}, the controlled density $p^*$ satisfies the
	Fokker--Planck equation with drift
	$f_t+\frac{\kappa_t}{1-\theta}\nabla\log\psi_t$.
	
	\begin{proposition}[Forward decoupling]
		\label{prop:decouple}
		Let $\psi$ solve \eqref{eq:psi-rs} and let $p^*$ be the optimally controlled
		density. Then
		\begin{equation}
			\label{eq:factor-rs}
			p^*(t,x)=\varphi(t,x)\,\psi(t,x)^{\frac{1}{1-\theta}},
		\end{equation}
		where $\varphi$ satisfies the closed linear forward equation
		\begin{equation}
			\label{eq:G-fwd}
			\partial_t \varphi = -\nabla\cdot(f_t \varphi)+\frac{\kappa_t}{2}\Delta \varphi-V_{\rm eff}(t,x)\varphi,
		\end{equation}
		with effective potential
		\begin{equation}
			\label{eq:Veff}
			V_{\rm eff}(t,x)=V(t,x)+\frac{\kappa_t\theta}{2(1-\theta)^2}\|\nabla\log\psi(t,x)\|^2,
		\end{equation}
		and initial condition $\varphi(0,x)=\mu_0(x)/\psi(0,x)^{1/(1-\theta)}$.
	\end{proposition}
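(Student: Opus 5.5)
The plan is a direct substitution: define $\varphi:=p^*\,\psi^{-1/(1-\theta)}$, so that \eqref{eq:factor-rs} and the initial condition $\varphi(0,\cdot)=\mu_0/\psi(0,\cdot)^{1/(1-\theta)}$ hold by construction since $p^*(0,\cdot)=\mu_0$. It then remains to show that $\varphi$ satisfies \eqref{eq:G-fwd}. Throughout I write $\beta:=1/(1-\theta)=1/\alpha$, $w:=\log\psi$ and $g:=\psi^{\beta}=e^{\beta w}$, so that $p^*=\varphi g$ and, by \eqref{eq:u-star}, the controlled drift is $f_t+\kappa_t\beta\nabla w$.

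First I would turn the linear backward equation \eqref{eq:psi-rs} into an equation for $w$. Dividing \eqref{eq:psi-rs} by $\psi$ gives
$$-\partial_t w+\alpha V=f\cdot\nabla w+\tfrac{\kappa}{2}\bigl(\Delta w+\|\nabla w\|^2\bigr).$$
Since $\beta\alpha=1$, multiplying by $\beta$ yields
$$\partial_t g=g\Bigl(V-\beta f\cdot\nabla w-\tfrac{\beta\kappa}{2}\Delta w-\tfrac{\beta\kappa}{2}\|\nabla w\|^2\Bigr).$$
I will also need $\nabla g=\beta g\nabla w$ and $\Delta g=g\bigl(\beta\Delta w+\beta^2\|\nabla w\|^2\bigr)$.

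Next I substitute $p^*=\varphi g$ into the Fokker--Planck equation
$$\partial_t p^*=-\nabla\cdot\bigl((f+\kappa\beta\nabla w)p^*\bigr)+\tfrac{\kappa}{2}\Delta p^*$$
and expand with the product rule, grouping the terms by the order of the derivatives falling on $\varphi$.

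\emph{First-order terms in $\varphi$.} The cross term $-\kappa\beta g\nabla w\cdot\nabla\varphi$ from the control drift cancels exactly against the cross term $\kappa\nabla\varphi\cdot\nabla g=\kappa\beta g\nabla w\cdot\nabla\varphi$ from $\tfrac{\kappa}{2}\Delta(\varphi g)$. This cancellation is the reason for the exponent $\beta$: it is what makes the $\varphi$-equation have drift $f$ alone. Only $-g f\cdot\nabla\varphi$ survives, which recombines with $-g\varphi\nabla\cdot f$ into $-g\nabla\cdot(f\varphi)$.

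\emph{Second-order term.} This is simply $\tfrac{\kappa}{2}g\Delta\varphi$.

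\emph{Zeroth-order terms in $\varphi$.} On the right-hand side these are
$$g\varphi\Bigl(-\beta f\cdot\nabla w-\tfrac{\kappa\beta}{2}\Delta w-\tfrac{\kappa\beta^2}{2}\|\nabla w\|^2\Bigr).$$
On the left-hand side, $\varphi\,\partial_t g$ contributes $g\varphi\bigl(V-\beta f\cdot\nabla w-\tfrac{\beta\kappa}{2}\Delta w-\tfrac{\beta\kappa}{2}\|\nabla w\|^2\bigr)$.

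Dividing by $g>0$ and moving everything to one side, the $f\cdot\nabla w$ and $\Delta w$ terms cancel. What remains is
$$\partial_t\varphi=-\nabla\cdot(f\varphi)+\tfrac{\kappa}{2}\Delta\varphi-\varphi\Bigl(V+\tfrac{\kappa}{2}(\beta^2-\beta)\|\nabla w\|^2\Bigr).$$
Since $\beta^2-\beta=\theta/(1-\theta)^2$, this is exactly \eqref{eq:G-fwd} with the effective potential \eqref{eq:Veff}.

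The main obstacle is purely bookkeeping: checking that the cross terms and the $\Delta w$ terms cancel, and that the coefficient of $\|\nabla w\|^2$ comes out as $\beta^2-\beta$ rather than $\beta^2$ or $\beta$. I would cross-check at $\theta=0$, where $\beta=1$ and $V_{\rm eff}=V$, so that we must recover \eqref{eq:phi-fwd}. Regularity (positivity and smoothness of $\psi$ and $p^*$) is taken under the paper's standing formal assumptions.
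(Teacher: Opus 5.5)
Your proposal is correct and is essentially the paper's argument. You substitute $p^*=\varphi\,\psi^{1/(1-\theta)}$ into the Fokker--Planck equation, use \eqref{eq:psi-rs} to eliminate $\partial_t\psi$ and $\Delta\psi$, and note that the exponent $1/(1-\theta)$ is exactly what cancels the $\nabla\varphi\cdot\nabla\psi$ cross term, leaving the coefficient $\beta^2-\beta=\theta/(1-\theta)^2$ on the score energy. The only cosmetic difference is that the paper carries a general exponent $r$ and solves $r-(1-\theta)^{-1}=0$, whereas you fix the exponent in advance and define $\varphi:=p^*\psi^{-1/(1-\theta)}$, which makes the factorisation and the initial condition automatic; your bookkeeping checks out term by term.
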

	
	\begin{proof}
		Insert the ansatz $p^*=\varphi\psi^r$ into the Fokker--Planck equation. The
		drift divergence produces a cross term proportional to
		$\nabla\varphi\cdot\nabla\psi$ with coefficient $r-(1-\theta)^{-1}$; setting
		it to zero yields $r=1/(1-\theta)$. Dividing by $\psi^r$ and using
		\eqref{eq:psi-rs} to eliminate $\partial_t\psi$ and $\Delta\psi$, the
		remaining scalar terms combine to give \eqref{eq:G-fwd}--\eqref{eq:Veff}.
	\end{proof}
	
	Since $\theta\ge0$, \eqref{eq:Veff} gives $V_{\rm eff}\ge V$ pointwise,
	strictly wherever $\nabla\log\psi\neq0$: risk sensitivity can only add to the
	potential landscape seen by the forward factor, concentrated where the
	backward score is steep. The hard terminal constraint $p^*(1,\cdot)=\mu_1$
	reads
	\begin{equation}
		\label{eq:Psi1}
		\varphi(1,x)\,\psi(1,x)^{1/(1-\theta)}=\mu_1(x),
	\end{equation}
	equivalently $\varphi(1,\cdot)\,e^{-\lambda^*(\cdot)}=\mu_1$ in terms of the
	multiplier: the compact identity that makes $\lambda^*$ a multiplier for the hard
	constraint, determined implicitly as a shooting parameter. At $\theta=0$ the
	score-energy term vanishes and \eqref{eq:G-fwd}--\eqref{eq:Psi1} reduce to
	\eqref{eq:phi-fwd} and the symmetric factorisation.
	
	\section{The fixed-point value has no KL dual}
	\label{sec:no-kl}
	
	For the standard bridge, Girsanov yields
	$\EE^{P^u}\!\int_0^1(\frac{\|u_t\|^2}{2\kappa_t}+V_t)\,dt
	=\KL(P^u\|P^V)+\text{const}$, so the bridge is the I-projection
	$\inf_{Q:\,Q_0=\mu_0,\,Q_1=\mu_1}\KL(Q\|P^V)$. It is natural to ask whether
	$J_\theta$, for $\theta>0$, admits an analogous projection representation. It
	does not, in a precise sense. The statements concern the fixed-point value
	$J_\theta$; the hard value $\cJ_\theta$ likewise admits no linear-programming
	formulation, for the independent reason that its exponent is law-dependent. The structural obstruction is the asymmetric
	factorisation \eqref{eq:factor-rs}: the Gibbs form
	$dQ^*/dR\propto\varphi\cdot\psi$, with factors of equal weight, is exactly
	what underlies the static KL dual, and it is what $\theta>0$ breaks. The asymmetry alone is not a proof: Lemma~\ref{lem:match} shows that a
	solution-dependent reference reproduces the risk-sensitive bridge optimizer in
	full; the
	impossibility is argued at the level of the value.
	
	\begin{theorem}[No KL dual]
		\label{thm:flagship}
		Fix the problem data $(f,\kappa,V)$, let $\theta\in(0,1)$, and assume the
		terminal coupling \eqref{eq:Psi1} admits a solution $\lambda^*+\RR$ for the
		endpoint pairs under consideration, so that $J_\theta$ is defined; this
		holds in the Gaussian setting, which is what the proofs test against.
		Then:
		(i) {(circularity)} for each fixed $\theta$, a Markov reference whose
		I-projection value reproduces $J_\theta$ exists only if the reference is
		built from the unknown solution $\nabla S_\theta$ itself, via the matching
		condition \eqref{eq:match};
		(ii) {(no $\theta$-uniform reference)} no single Markov reference
		represents $J_\theta$ as a constrained KL minimum for two distinct values
		of $\theta$;
		(iii) {(no fixed reference at all)} there is no endpoint-regular
		(two-sided Gaussian endpoint joint law, Definition~\ref{def:endpoint-regular})
		path-space reference $R_\theta$, Markovian or not, and no additive
		correction $C(\mu_0)$, such that
		$J_\theta(\mu_0\to\mu_1)=\inf_{Q:\,Q_0=\mu_0,\,Q_1=\mu_1}\KL(Q\|R_\theta)+C(\mu_0)$
		for all absolutely continuous marginals with full support and finite second
		moment. Part (i) is Lemma~\ref{lem:match}, part (ii) is
		Proposition~\ref{prop:no-uniform}, and part (iii) is
		Proposition~\ref{prop:no-kl}; the last two are proved for the scalar
		Brownian data $f\equiv0$, $\kappa=1$, $V\equiv0$ of
		Lemma~\ref{lem:value-pm}: the obstruction appears already in the simplest
		setting, which precludes any representation serving general data.
	\end{theorem}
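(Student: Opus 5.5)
The plan is to prove the three parts separately, in order of increasing strength. Parts (ii) and (iii) are reduced to explicit Gaussian computations for scalar Brownian data $f\equiv0$, $\kappa=1$, $V\equiv0$.

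For part (i), I would start from Girsanov. Take a Markov reference $R$ with drift $f+b$ and the same diffusion coefficient $\kappa$. For any $Q=P^u$ this gives $\KL(P^u\|R)=\EE^{P^u}\int_0^1\frac{1}{2\kappa_t}\|u_t-b_t\|^2dt$ plus a potential term. The I-projection onto the endpoint constraints has the classical Schr\"odinger form, so its optimal drift is $b+\kappa\nabla\log h$ with $h$ space--time harmonic for the generator of $R$. For this projection to reproduce the risk-sensitive optimizer \eqref{eq:u-star}, we need $b+\kappa\nabla\log h=\frac{\kappa}{1-\theta}\nabla\log\psi=-\kappa\nabla S_\theta$. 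This is the matching condition \eqref{eq:match}. Solving it for $b$ (equivalently, for a reference potential) makes $b$ depend on $\nabla S_\theta$, which is the unknown solution. Matching the value as well only adds a scalar normalisation. This establishes the circularity claimed in Lemma~\ref{lem:match}.

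For part (ii), I would specialise to Gaussian marginals $\mu_0=\cN(m_0,s_0)$, $\mu_1=\cN(m_1,s_1)$. The closed-form value of Lemma~\ref{lem:value-pm} (via the Riccati reduction) should express $J_\theta$ as an explicit function of $(m_0,s_0,m_1,s_1,\theta)$. A Markov Gaussian reference produces an I-projection value of a rigid form: a quadratic in the means plus a log-determinant/trace expression in the variances, with coefficients fixed by $R$. I would fix $R$ and compare the two values $\theta\ne\theta'$. The coefficients of $R$ are determined by the dependence of $J_\theta$ on the means and variances. The expansion \eqref{eq:first-order-value} shows that the $\theta$-dependent variance term $\frac\theta2\Var(K+\lambda_0^*(X_1))$ is nonconstant in $(s_0,s_1)$. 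Hence the coefficients forced at $\theta$ and at $\theta'$ disagree, so no single reference works for both.

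For part (iii), the key structural fact is that the I-projection value against any reference $R$ has a Legendre/Schr\"odinger-potential form. Write $\pi_R$ for the endpoint joint law of $R$. Then
\[\inf_{Q_0=\mu_0,Q_1=\mu_1}\KL(Q\|R)=\inf_{\pi\in\Pi(\mu_0,\mu_1)}\KL(\pi\|\pi_R),\]
since the conditional bridges of $R$ may be kept. This reduces the problem to the static endpoint law, so no Markov property of $R$ is needed. Under the endpoint-regular hypothesis (Definition~\ref{def:endpoint-regular}), $\pi_R$ is two-sided Gaussian. I would then probe with $\mu_1=\cN(m,\varepsilon)$, where the variance parameter plays the role of time. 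The static value, viewed as a function of the terminal mean and variance, is then a Gaussian smoothing of a fixed endpoint density (its $\log$ up to $C(\mu_0)$ and the entropy of $\mu_1$). As such it must satisfy a heat equation $\partial_\varepsilon F=\frac12\partial_m^2F+\ldots$ in the terminal parameters, and subtracting the entropy of $\mu_1$ removes the trivial terms. Substituting the explicit Gaussian $J_\theta$ from Lemma~\ref{lem:value-pm} should leave a residual of exactly $\theta/2$. This contradicts the identity for every admissible $C(\mu_0)$, because $C$ does not depend on the terminal parameters.

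The main obstacle is deriving the precise heat-type identity that every endpoint-regular $\pi_R$ forces. It has to be obtained uniformly: the Gaussian endpoint law of $R$ is arbitrary (any mean map and any covariance), and the entropic terms must be tracked correctly. I would first do the case $\mu_0$ a point-mass limit, or $\mu_0$ a fixed Gaussian, where the static Gaussian Schr\"odinger value is explicit. I would isolate the $m$- and $\varepsilon$-derivatives there, and only then verify that the $\theta/2$ defect survives the freedom in $C(\mu_0)$.
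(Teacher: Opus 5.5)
Your part (ii) does not go through as proposed. You assume the reference is Gaussian, so that its projection value between Gaussian marginals has a ``rigid'' quadratic/log-determinant form whose coefficients can be read off. Proposition~\ref{prop:no-uniform}, however, concerns an arbitrary Markov reference $(h,U)$ with Lipschitz, linear-growth $h$ and bounded $U$, and no such form is available there. Also, Lemma~\ref{lem:value-pm} gives $J_\theta$ only for $\mu_0=\delta_0$, not for general Gaussian $\mu_0$. More seriously, \eqref{eq:first-order-value} is an expansion as $\theta\to0$. It can separate $J_\theta$ from $J_{\theta'}$ only when both parameters are small, and says nothing about, e.g., $\theta=0.3$ versus $\theta'=0.7$.

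The missing point is that you need no knowledge of the reference, only an exact value. The same KL term appears for both parameters, so the two identities force $J_{\theta_1}-J_{\theta_2}$ to be independent of the marginals. The paper applies both identities with $\mu_0^\varepsilon=\cN(0,\varepsilon)$ and passes to the point-mass limit. This gives $v_{\theta_i}(m,s^2)=\KL(\cN(m,s^2)\|\rho_0)+C_{\theta_i}$ with one and the same $\rho_0$, and the paper subtracts. It then uses the exact evaluation at $s^2=1$, where $a_1=0$ and $M=B=1$: $v_\theta(m,1)=\frac{1+\theta}{2}m^2$ for every $\theta\in[0,1)$. The difference $\frac12(\theta_1-\theta_2)m^2$ is therefore not constant.

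In part (iii) your mechanism is the paper's: static reduction, Gaussian smoothing of a fixed $\log\rho_0$, a heat equation in $(m,s^2)$, and a $\theta/2$ defect. But you state the key claim in a setting where it is false. For a fixed non-degenerate $\mu_0$, the static value $\inf_{\pi\in\Pi(\mu_0,\mu_1)}\KL(\pi\|R_{01})$ is an entropic transport cost, not $\KL(\mu_1\|\rho)$ plus a constant. So your option ``$\mu_0$ a fixed Gaussian'' cannot produce a heat equation. The Gaussian-smoothing form exists only after the point-mass limit of Lemma~\ref{lem:static}(ii), with $\rho_0=R_\theta(X_1\in\cdot\mid X_0=0)$. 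That limit is taken on differences of two targets, so that $C(\mu_0^\varepsilon)$ cancels. It is where endpoint regularity enters, and Definition~\ref{def:endpoint-regular} gives two-sided Gaussian bounds on the density $r$, not a Gaussian endpoint law. An explicit Gaussian Schr\"odinger computation would therefore cover only a subclass of references.

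Conversely, what you call the main obstacle is not one. Once $v_\theta=\KL(\cN(m,s^2)\|\rho_0)+\mathrm{const}$, the heat equation holds for every fixed density $\rho_0$, with nothing to make uniform. The defect is then a one-point check at $(m,s^2)=(0,1)$. There $\partial_{s^2}v_\theta(0,1)=0$ gives $\partial_{s^2}L_\theta=-\frac12$, while $\frac12\partial_m^2L_\theta=-\frac{1+\theta}{2}$.

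Your part (i) matches optimal drifts instead of subtracting the two HJB equations as Lemma~\ref{lem:match} does. It reaches the same condition \eqref{eq:match} (reference drift $f$, potential $U=V_{\rm eff}$) and is adequate at the informal level at which (i) is stated.
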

	
	\subsection{The per-$\theta$ representation is circular}
	Consider a candidate Markov reference $P^{h,U}$ with drift $h$ and
	Feynman--Kac potential $U$ (a sub-probability path measure when $U\neq0$, to
	which $\KL$ extends verbatim); relative to it, $\inf_Q\KL(Q\|P^{h,U})$ is itself
	an ordinary bridge problem whose value function $S_0^{h,U}$ satisfies
	\eqref{eq:hjb-std} with $(f,V)$ replaced by $(h,U)$.
	
	\begin{lemma}[Matching condition]
		\label{lem:match}
		Fix the problem data $(f,\kappa,V)$, a terminal cost $\lambda$, and a Markov
		reference $(h,U)$ for which the two HJB equations admit classical solutions.
		Then $S_0^{h,U}(\cdot\,;\lambda)\equiv S_\theta(\cdot\,;\lambda)$ if and only if
		\begin{equation}
			\label{eq:match}
			\begin{split}
				U(t,x)=V(t,x)&+\bigl(f(t,x)-h(t,x)\bigr)\cdot\nabla S_\theta(t,x)\\
				&+\frac{\kappa_t\theta}{2}\|\nabla S_\theta(t,x)\|^2
			\end{split}
		\end{equation}
		for all $(t,x)$. When \eqref{eq:match} holds with $h=f$, i.e.\
		$U=V_{\rm eff}$ of \eqref{eq:Veff}, and $\lambda=\lambda^*$ implements
		$X_1\sim\mu_1$ via \eqref{eq:Psi1}, the I-projection of $P^{f,V_{\rm eff}}$
		onto $\{Q_0=\mu_0,\,Q_1=\mu_1\}$ coincides with the risk-sensitive bridge optimizer $P^{u^*}$
		itself, not merely with its value.
	\end{lemma}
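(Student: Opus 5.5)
The plan is to prove the equivalence by subtracting the two Hamilton--Jacobi--Bellman equations, and then to obtain the optimizer statement by comparing factorisations of the two optimal laws.

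\textbf{Equivalence.} Both $S_0^{h,U}$ and $S_\theta$ carry the same terminal datum $\lambda$. The first solves \eqref{eq:hjb-std} with $(f,V)$ replaced by $(h,U)$; the second solves \eqref{eq:hjb-rs}.

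For the ``if'' direction, substitute \eqref{eq:match} into the $(h,U)$-equation evaluated at $S=S_\theta$. The drift term becomes
\[
h\cdot\nabla S_\theta+(f-h)\cdot\nabla S_\theta=f\cdot\nabla S_\theta .
\]
The quadratic terms combine as
\[
-\tfrac{\kappa_t}{2}\|\nabla S_\theta\|^2+\tfrac{\kappa_t\theta}{2}\|\nabla S_\theta\|^2=-\tfrac{\kappa_t}{2}(1-\theta)\|\nabla S_\theta\|^2 .
\]
So $S_\theta$ solves the $(h,U)$-HJB equation with terminal value $\lambda$. Uniqueness of classical solutions, which is standard after the Cole--Hopf linearisation \eqref{eq:psi-bwd} reduces the equation to a linear backward parabolic problem, then gives $S_0^{h,U}\equiv S_\theta$.

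For the ``only if'' direction, set $S:=S_0^{h,U}=S_\theta$. Subtracting the two equations cancels $\partial_t S$ and $\tfrac{\kappa_t}{2}\Delta S$, and leaves the pointwise identity
\[
U+h\cdot\nabla S-\tfrac{\kappa_t}{2}\|\nabla S\|^2=V+f\cdot\nabla S-\tfrac{\kappa_t}{2}(1-\theta)\|\nabla S\|^2,
\]
which is exactly \eqref{eq:match}. With $h=f$ this reads
\[
U=V+\tfrac{\kappa_t\theta}{2}\|\nabla S_\theta\|^2 .
\]
Since $\nabla S_\theta=-\nabla\log\psi/(1-\theta)$ by \eqref{eq:ch-rs}, this is $V_{\rm eff}$ of \eqref{eq:Veff}.

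\textbf{Optimizer statement.} I would check three things.

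First, the feedback laws agree. The reference bridge uses feedback $-\kappa_t\nabla S_0^{h,U}=-\kappa_t\nabla S_\theta$. By \eqref{eq:ch-rs} this equals $\frac{\kappa_t}{1-\theta}\nabla\log\psi$, which is $u^*$ of \eqref{eq:u-star}.

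Second, the initial laws agree. Both processes start from $\mu_0$, so they share initial law and controlled drift $f+u^*$. Hence their Markov path laws coincide, and in particular their time marginals coincide.

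Third, the optimal law is admissible for the projection. I would verify this directly through the Schrödinger system for the reference $P^{f,V_{\rm eff}}$. Its backward factor is $\tilde\psi=e^{-S_\theta}=\psi^{1/(1-\theta)}$. Its forward factor solves \eqref{eq:phi-fwd} with $V$ replaced by $V_{\rm eff}$, which is \eqref{eq:G-fwd}, with initial datum $\mu_0/\tilde\psi_0$. That is exactly the $\varphi$ of Proposition~\ref{prop:decouple}. Therefore the reference's symmetric factorisation $\varphi\tilde\psi$ equals $p^*$ by \eqref{eq:factor-rs}. The terminal coupling \eqref{eq:Psi1} then gives $Q_1=\mu_1$.

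Finally, the I-projection onto $\{Q_0=\mu_0,Q_1=\mu_1\}$ is characterised by this Schrödinger system via classical theory \cite{Leonard2014}. By strict convexity of $\KL$ it is unique. So it equals $P^{u^*}$.

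\textbf{Main obstacle.} The delicate point is the reference $P^{f,V_{\rm eff}}$, which is only a sub-probability measure, as the paper allows. I must make sure the classical I-projection/Schrödinger-system characterisation still applies to it. I would handle this by renormalising by its total mass, which shifts $\KL$ by a constant and leaves the minimiser unchanged. The regularity assumptions stated in the lemma cover the remaining issues.
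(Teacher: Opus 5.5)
Your proposal is correct and follows the paper's proof. Subtracting the two HJB equations (with uniqueness of classical solutions for the converse) gives \eqref{eq:match}, and the identical feedback $-\kappa_t\nabla S_\theta=u^*$ together with the common initial law $\mu_0$ gives identical path measures; your extra Schr\"odinger-system and strict-convexity-of-$\KL$ check, that the reference's soft optimizer at $\lambda^*$ satisfies $Q_1=\mu_1$ and is therefore the I-projection, makes explicit a step the paper leaves implicit, though admissibility already follows from the hypothesis on $\lambda^*$ once the path laws coincide.
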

	
	\begin{proof}
		Subtracting the two HJB equations and using $S_0\equiv S_\theta$ gives
		\eqref{eq:match}; conversely, under \eqref{eq:match} the function
		$S_\theta$ solves the reference HJB with the same terminal datum, and
		uniqueness gives $S_0\equiv S_\theta$. For the last
		statement, the reference bridge's optimal total drift is
		$h-\kappa_t\nabla S_0$; with $h=f$ and $S_0=S_\theta$ this is
		$f+\frac{\kappa_t}{1-\theta}\nabla\log\psi$, exactly \eqref{eq:u-star}.
		Same initial law and same feedback give the same path measure.
	\end{proof}
	
	So a KL-projection representation of $J_\theta$ exists for each fixed $\theta$, but only by building the reference out of $\nabla S_\theta$ itself,
	i.e.\ out of the unknown solution: a construction that presupposes the answer
	and is not a duality in any useful sense; read positively, the matching
	condition made constructive is a fixed-point iteration on the multiplier
	coset.
	
	\subsection{Two ingredients}
	The remaining proofs rest on an explicit Gaussian value and a static
	reduction.
	
	\begin{lemma}[Point-mass Gaussian value]
		\label{lem:value-pm}
		Let $d=1$, $f\equiv0$, $V\equiv0$, $\kappa=1$, $\mu_0=\delta_0$,
		$\mu_1=\cN(m,s^2)$, and write $\alpha=1-\theta$. Then
		\begin{equation}
			\label{eq:value-pm}
			\begin{split}
				v_\theta(m,s^2)&:=J_\theta(\delta_0\to\cN(m,s^2))\\
				&=\frac{1}{2\alpha}\log(1+a_1)-\frac{a_1}{2\alpha}\bigl(m^2+s^2\bigr)\\
				&\qquad+\frac{m^2}{M}-\frac{\alpha B}{2M^2}\,m^2,
			\end{split}
		\end{equation}
		where $a_1$ is determined by the point-mass shooting map
		\begin{equation}
			\label{eq:shoot-pm}
			s^2=\int_0^1\bigl(1+a_1 \tau\bigr)^{-2/\alpha}\,d\tau,
		\end{equation}
		and $M:=\int_0^1(1+a_1\tau)^{-1/\alpha-1}\,d\tau$,
		$B:=\int_0^1(1+a_1\tau)^{-2}\,d\tau=1/(1+a_1)$. The map $a_1\mapsto s^2$ is
		a strictly decreasing bijection of $(-1,\infty)$ onto $(0,\infty)$, with
		$s^2=1\iff a_1=0$. At $\theta=0$, \eqref{eq:value-pm} collapses to
		$\frac12(m^2+s^2-1-\log s^2)=\KL(\cN(m,s^2)\,\|\,\cN(0,1))$, as it must.
	\end{lemma}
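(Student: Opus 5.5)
The plan is to verify the formula through an explicit quadratic ansatz for the terminal multiplier. With $f\equiv0$, $V\equiv0$, $\kappa=1$, the generator in \eqref{eq:psi-rs} is the backward heat operator, so Gaussian data propagate in closed form. I take
\[
\lambda(x)=\frac{a_1x^2-2bx}{2\alpha}+c,
\]
so that $\psi(1,x)=e^{-\alpha\lambda(x)}$ is an unnormalised Gaussian; the constant $c$ is fixed at the end by the gauge $\langle\lambda,\mu_1\rangle=0$ of Definition~\ref{def:bridge-value}.

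\emph{Backward factor and feedback.} Solving the backward heat equation with time-to-go $\tau=1-t$ gives
\[
\psi_t(x)\propto(1+a_1\tau)^{-1/2}\exp\!\Bigl(-\frac{a_1x^2-2bx}{2(1+a_1\tau)}+\frac{b^2\tau}{2(1+a_1\tau)}\Bigr).
\]
This is well defined and positive exactly when $1+a_1\tau>0$ for all $\tau\in[0,1]$, i.e.\ $a_1>-1$; this explains the domain $(-1,\infty)$. By \eqref{eq:ch-rs} one then reads off $S_\theta(0,0)=-\log\psi_0(0)/\alpha$. This quantity consists of $\frac1{2\alpha}\log(1+a_1)$, a $b^2$ term with coefficient $\propto 1/(1+a_1)=B$, and $c$. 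By \eqref{eq:u-star} the optimal feedback is affine:
\[
u_t(x)=-\frac{a_1x-b}{\alpha(1+a_1\tau)}.
\]

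\emph{Forward moments.} Under this feedback and started at $\delta_0$, $X_t$ is Gaussian, with mean $\mu_t$ and variance $\Sigma_t$ solving linear ODEs. The variance obeys $\dot\Sigma=-\frac{2a_1}{\alpha(1+a_1\tau)}\Sigma+1$, and its integrating factor is $(1+a_1\tau)^{-2/\alpha}$ (up to normalisation). Integrating from $0$ to $1$ gives $\Sigma_1=\int_0^1(1+a_1\tau)^{-2/\alpha}d\tau$, which is \eqref{eq:shoot-pm}. The mean equation has integrating factor $(1+a_1\tau)^{-1/\alpha}$ and a forcing term $b/(\alpha(1+a_1\tau))$, so $\mu_1=\frac{b}{\alpha}M$. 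Imposing $X_1\sim\cN(m,s^2)$ therefore fixes $a_1$ through the shooting map and $b=\alpha m/M$. Since the law is Gaussian, the two moments are the entire terminal constraint \eqref{eq:Psi1}.

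\emph{Value.} Next I compute $\langle\lambda,\mu_1\rangle=\frac{a_1(m^2+s^2)-2bm}{2\alpha}+c$ and form $J_\theta=S_\theta(0,0)-\langle\lambda,\mu_1\rangle$ as in \eqref{eq:bridge-value}, in which $c$ cancels. Substituting $b=\alpha m/M$, the term $+bm/\alpha$ becomes $m^2/M$. The $b^2$ term from $\psi_0(0)$ becomes $-\frac{\alpha B}{2M^2}m^2$. The remaining pieces give $\frac1{2\alpha}\log(1+a_1)-\frac{a_1}{2\alpha}(m^2+s^2)$, which is \eqref{eq:value-pm}.

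\emph{Properties of the shooting map.} For each $\tau>0$ the integrand $(1+a_1\tau)^{-2/\alpha}$ is strictly decreasing and continuous in $a_1$. As $a_1\to\infty$ the integral tends to $0$. As $a_1\downarrow-1$ it tends to $\int_0^1(1-\tau)^{-2/\alpha}d\tau=\infty$, because $2/\alpha>1$. Together these give the bijection onto $(0,\infty)$, and $a_1=0$ gives $s^2=1$.

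\emph{The case $\theta=0$.} With $\alpha=1$ one has $s^2=1/(1+a_1)=B=M$. Substituting yields $\frac12\bigl(-\log s^2-(1-s^{-2})(m^2+s^2)\bigr)+\frac{m^2}{s^2}-\frac{m^2}{2s^2}$, which simplifies to $\frac12(m^2+s^2-1-\log s^2)$.

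The main obstacle is bookkeeping rather than any conceptual step. One has to track the constant and $b^2$ terms in $\log\psi_0(0)$ and to use the correct sign convention for $\tau$ versus $t$ in the integrating factors. I would double-check the $b^2$ coefficient by the $\theta=0$ reduction above. I would also check it by differentiating $v_\theta$ in $m$: the derivative should equal $-\langle\partial_m\lambda,\mu_1\rangle$-type terms consistent with the envelope structure.
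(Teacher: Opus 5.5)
Your proposal is correct and follows essentially the paper's route: you solve the backward factor in closed form (the $k=0$ Riccati system, done here via the explicit heat kernel), read off the shooting map and $b=\alpha m/M$ from the forward moments, and evaluate \eqref{eq:bridge-value} directly; starting the moment ODEs at $\Sigma(0)=0$ simply replaces the paper's limit $\sigma(0)\downarrow0$. There is one bookkeeping slip in your $\theta=0$ check: since $-a_1=1-s^{-2}$, the middle term should read $+\frac12(1-s^{-2})(m^2+s^2)$ rather than $-\frac12(1-s^{-2})(m^2+s^2)$, and with that sign the expression does simplify to $\frac12(m^2+s^2-1-\log s^2)$.
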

	
	\begin{proof}
		Set $k=0$ in the scalar Riccati system of Section~\ref{sec:gaussian}:
		$\dot A=A^2$, $\dot b=Ab$, whose solutions are rational in $t$; passing
		$\sigma(0)\downarrow0$ in the shooting equation \eqref{eq:shoot} yields
		\eqref{eq:shoot-pm} (cf.\ Remark~\ref{rem:shoot-wp}), and direct evaluation of
		\eqref{eq:bridge-value} gives \eqref{eq:value-pm}. Moreover
		$\partial s^2/\partial a_1=-\frac{2}{\alpha}\int_0^1\tau(1+a_1\tau)^{-2/\alpha-1}d\tau<0$
		is nonvanishing, so $a_1$ is a $C^1$ function of $s^2$ down to $s^2=1$,
		with derivative $-\alpha$ there.
	\end{proof}
	
	\begin{definition}[Endpoint regularity]
		\label{def:endpoint-regular}
		A probability measure $R$ on $C([0,1];\RR^d)$ is {endpoint-regular} if
		its endpoint law $R_{01}:=\operatorname{Law}_R(X_0,X_1)$ has a positive
		continuous density $r(x,y)$ with two-sided Gaussian bounds:
		there exist $c,\Lambda>0$ and a continuous $m_0$ of at most linear growth
		such that
		\begin{equation}
			\label{eq:endpoint-regular}
			c^{-1}e^{-\Lambda|y-m_0(x)|^2}\;\le\; r(x,y)\;\le\; c\,e^{-\Lambda^{-1}|y-m_0(x)|^2}
		\end{equation}
		for all $x,y\in\RR^d$. This holds for the path law of any uniformly elliptic diffusion with
		bounded measurable drift and bounded Feynman--Kac potential by Aronson's
		two-sided estimates~\cite{aronson1967}, and for Ornstein--Uhlenbeck
		references by their explicit Gaussian law.
	\end{definition}
	
	\begin{lemma}[Static reduction and the point-mass limit]
		\label{lem:static}
		Let $R$ be endpoint-regular and set
		$I_R(\mu_0,\mu_1):=\inf_{Q:\,Q_0=\mu_0,\,Q_1=\mu_1}\KL(Q\|R)$. Then:
		(i) {(static reduction)}
		$I_R(\mu_0,\mu_1)=\inf_{\pi\in\Pi(\mu_0,\mu_1)}\KL(\pi\|R_{01})$, a static
		Schr\"odinger problem on $\RR^d\times\RR^d$;
		(ii) {(point-mass limit)} if $\mu_0^\varepsilon=\cN(0,\varepsilon I_d)$
		and $\mu_1,\mu_1^*$ are absolutely continuous with finite second moment and
		finite entropy, then, with $\rho_0(\cdot):=R(X_1\in\cdot\mid X_0=0)$,
		\begin{equation}
			\label{eq:pm-limit}
			I_R(\mu_0^\varepsilon,\mu_1)-I_R(\mu_0^\varepsilon,\mu_1^*)
			\;\xrightarrow[\varepsilon\downarrow0]{}\;
			\KL(\mu_1\|\rho_0)-\KL(\mu_1^*\|\rho_0).
		\end{equation}
	\end{lemma}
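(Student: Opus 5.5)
The plan is to prove (i) by disintegration and (ii) by a Gaussian-smoothing argument, using the two-sided bounds \eqref{eq:endpoint-regular} only to justify limits.

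For part (i), disintegrate both path measures at the endpoints. Write $Q=\int Q^{xy}\,\pi(dx,dy)$ and $R=\int R^{xy}\,R_{01}(dx,dy)$, where $Q^{xy},R^{xy}$ are the bridges conditional on $(X_0,X_1)=(x,y)$. The chain rule for relative entropy gives
\[
\KL(Q\|R)=\KL(\pi\|R_{01})+\int\KL(Q^{xy}\|R^{xy})\,\pi(dx,dy).
\]
The second term is nonnegative and vanishes exactly when $Q^{xy}=R^{xy}$ for $\pi$-a.e.\ $(x,y)$. So for each coupling $\pi\in\Pi(\mu_0,\mu_1)$ the infimum over $Q$ with $Q_{01}=\pi$ equals $\KL(\pi\|R_{01})$. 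Taking the infimum over $\pi$ gives (i). The only technical point is the existence of regular conditional laws, which holds on the Polish space $C([0,1];\RR^d)$. Endpoint regularity is not needed for (i) itself; it only ensures that $R_{01}$ has a density, which is used in (ii).

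For part (ii), work with the static problem. Write $r(x,y)=r_0(x)\,\rho_x(y)$, where $r_0$ is the $X_0$-marginal density and $\rho_x(\cdot)=R(X_1\in\cdot\mid X_0=x)$. For $\pi$ with first marginal $\mu_0^\varepsilon$, the chain rule again gives
\[
\KL(\pi\|R_{01})=\KL(\mu_0^\varepsilon\|r_0)+\int\KL(\pi_x\|\rho_x)\,\mu_0^\varepsilon(dx),
\]
where $\pi_x$ is the conditional law of $X_1$ given $X_0=x$ under $\pi$. The first term does not depend on $\mu_1$, so it cancels in the difference in \eqref{eq:pm-limit}. It therefore suffices to show that
\[
F_\varepsilon(\mu_1):=\inf\Bigl\{\int\KL(\pi_x\|\rho_x)\,\mu_0^\varepsilon(dx)\;:\;\int\pi_x\,\mu_0^\varepsilon(dx)=\mu_1\Bigr\}\;\longrightarrow\;\KL(\mu_1\|\rho_0).
\]

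\emph{Upper bound.} Take the product coupling $\pi_x=\mu_1$ for all $x$, so that $F_\varepsilon(\mu_1)\le\int\KL(\mu_1\|\rho_x)\,\mu_0^\varepsilon(dx)$. By \eqref{eq:endpoint-regular}, $-\log\rho_x(y)$ is bounded by a quadratic in $y$, uniformly for $x$ in a neighbourhood of $0$, using continuity and linear growth of $m_0$. Together with finite second moment and finite entropy of $\mu_1$, dominated convergence and continuity of $r$ then give convergence of the bound to $\KL(\mu_1\|\rho_0)$.

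\emph{Lower bound.} Use joint convexity of $\KL$ to get
\[
\int\KL(\pi_x\|\rho_x)\,\mu_0^\varepsilon(dx)\;\ge\;\KL\Bigl(\mu_1\,\Big\|\,\int\rho_x\,\mu_0^\varepsilon(dx)\Bigr).
\]
The mixture $\int\rho_x\,\mu_0^\varepsilon(dx)$ converges to $\rho_0$ pointwise, with Gaussian-dominated densities. Lower semicontinuity of $\KL$ then gives $\liminf_{\varepsilon\downarrow0}F_\varepsilon(\mu_1)\ge\KL(\mu_1\|\rho_0)$.

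Both bounds apply equally to $\mu_1^*$, which yields \eqref{eq:pm-limit}.

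I expect the main obstacle to be the uniform integrability in the upper bound. One needs $|\log\rho_x(y)|\le C(1+|y|^2)$ uniformly for $|x|\le1$. The $x$-marginal normalisation $r_0(x)$ enters through $\rho_x=r/r_0$, so continuity and positivity of $r_0$ near $0$ must be extracted from the two-sided bounds by integrating them in $y$. Both the upper and lower Gaussian bounds are essential here, and this is exactly where endpoint regularity is used.
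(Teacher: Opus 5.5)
Your proof is correct and follows the paper's route: the endpoint chain rule for (i), and for (ii) the same two-sided squeeze, which you make explicit by disintegrating at $X_0$ so that the $\mu_1$-independent term $\KL(\mu_0^\varepsilon\|r_0)$ cancels, then using the product coupling for the upper bound and joint convexity with lower semicontinuity for the lower bound. One caveat on your final paragraph: integrating the lower bound of \eqref{eq:endpoint-regular} in $y$ gives $r_0\ge c^{-1}(\pi/\Lambda)^{d/2}$ on all of $\RR^d$, which is incompatible with $\int r_0=1$, so the bounds should instead be read on the transition density $\rho_x$ (which is what Aronson's estimates supply). Under that reading your quadratic control of $\log\rho_x$ follows directly from the hypothesis, and the cancellation needs the further implicit assumption $\KL(\mu_0^\varepsilon\|R_0)<\infty$.
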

	
	\begin{proof}
		Part (i) is the chain rule for KL divergence disintegrated over the
		endpoint pair: the bridge term is minimised at $0$ by adopting the bridges
		of $R$. Part (ii) is a two-sided squeeze under
		\eqref{eq:endpoint-regular}: the two-sided Gaussian bounds control the
		contribution of $\{X_0\neq0\}$ as $\varepsilon\downarrow0$, and positivity
		and continuity of $r$ near $(0,\cdot)$ give the matching lower bound.
	\end{proof}
	
	Some restriction on the additive freedom $C$ is necessary for the question
	to be meaningful: if $C$ could depend on both marginals, any reference
	would represent $J_\theta$ by setting
	$C(\mu_0,\mu_1):=J_\theta-\inf\KL(\cdot\,\|R_\theta)$. Dependence on $\mu_0$
	only is the natural class: for the standard bridge the additive term is the
	Feynman--Kac normalisation, independent of the marginals altogether.
	
	\subsection{No reference works uniformly in $\theta$}
	\begin{proposition}[No $\theta$-uniform KL dual]
		\label{prop:no-uniform}
		There is no Markov reference $(h,U)$, chosen independently of $\theta$, with
		$h$ Lipschitz in $x$ of at most linear growth and $U$ bounded, and constants
		$C_\theta$ independent of the marginals, such that, in the setting of
		Lemma~\ref{lem:value-pm},
		$J_\theta(\mu_0\to\mu_1)
		=\inf_{Q:\,Q_0=\mu_0,\,Q_1=\mu_1}\KL(Q\|P^{h,U})+C_\theta$
		holds for two distinct values $\theta\in[0,1)$ and all absolutely continuous
		marginals $\mu_0,\mu_1$.
	\end{proposition}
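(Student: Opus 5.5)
Suppose, for contradiction, that a single Markov reference $R:=P^{h,U}$ with constants $C_{\theta_1},C_{\theta_2}$ represents $J_{\theta_i}$ for two distinct $\theta_1\neq\theta_2$ in $[0,1)$. Since $h$ is Lipschitz with linear growth and $U$ is bounded, $R$ is endpoint-regular by Aronson's estimates (Definition~\ref{def:endpoint-regular}), so Lemma~\ref{lem:static} applies. Subtracting the two representations removes the reference altogether:
\[
J_{\theta_1}(\mu_0\to\mu_1)-J_{\theta_2}(\mu_0\to\mu_1)=C_{\theta_1}-C_{\theta_2}
\]
for all admissible $\mu_0,\mu_1$. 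The difference of bridge values must therefore be independent of the marginals. The plan is to contradict this using the explicit point-mass value of Lemma~\ref{lem:value-pm}. Because that value is computed only at $\mu_0=\delta_0$, which is not absolutely continuous, I approximate it by $\mu_0^\varepsilon=\cN(0,\varepsilon)$.

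First step: take two absolutely continuous Gaussian terminal laws $\mu_1,\mu_1^*$. Apply Lemma~\ref{lem:static}(ii) to each representation. This gives that $J_{\theta_i}(\mu_0^\varepsilon\to\mu_1)-J_{\theta_i}(\mu_0^\varepsilon\to\mu_1^*)$ converges to $\KL(\mu_1\|\rho_0)-\KL(\mu_1^*\|\rho_0)$, with the same $\rho_0=R(X_1\in\cdot\mid X_0=0)$ for $i=1,2$. Independently, continuity of the Gaussian Riccati/shooting solution as $\sigma(0)\downarrow0$ (the limit used in the proof of Lemma~\ref{lem:value-pm}) identifies the same limit as $v_{\theta_i}(m,s^2)-v_{\theta_i}(m^*,s^{*2})$. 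Consequently
\[
v_{\theta_1}(m,s^2)-v_{\theta_2}(m,s^2)
\]
must be constant in $(m,s^2)$. Justifying the continuity of $J_\theta(\cN(0,\varepsilon)\to\cN(m,s^2))$ in $\varepsilon$ is a routine use of the Gaussian section. I would invoke Remark~\ref{rem:shoot-wp} and the strict monotonicity of the shooting map for it.

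Second step: show that $v_{\theta_1}-v_{\theta_2}$ is not constant. The cheapest test is to set $m=0$ and differentiate in $s^2$ at $s^2=1$, where $a_1=0$ for every $\theta$. With $m=0$ we have $v_\theta=\frac{1}{2\alpha}\log(1+a_1)-\frac{a_1}{2\alpha}s^2$. Its derivative is
\[
\partial_{s^2}v_\theta=\frac{1}{2\alpha}\Bigl(\frac{1}{1+a_1}-s^2\Bigr)\partial_{s^2}a_1-\frac{a_1}{2\alpha}.
\]
This vanishes at $s^2=1$ for all $\theta$, so first order is uninformative. I therefore go to second order. Using $\partial_{s^2}a_1=-\alpha$ at $s^2=1$ from Lemma~\ref{lem:value-pm}, differentiating once more gives
\[
\partial^2_{s^2}v_\theta\big|_{s^2=1}=\frac{1}{2\alpha}\bigl(-\partial a_1-1\bigr)\partial a_1-\frac{\partial a_1}{2\alpha}=\frac{\alpha}{2}\cdot\frac{\alpha-1}{\alpha}+\frac12=\frac{\alpha}{2}.
\]
This gives $\partial^2_{s^2}v_\theta|_{s^2=1}=(1-\theta)/2$, which differs for $\theta_1\neq\theta_2$. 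That is the contradiction.

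As a fallback, one can instead use the $m$-dependence at $s^2=1$. There $a_1=0$, $M=1$ and $B=1$, so $v_\theta=(1-\alpha/2)m^2=\frac{1+\theta}{2}m^2$, whose $m^2$-coefficient also depends on $\theta$. This route is in fact cleaner, since it needs no derivative of $a_1$, so I would lead with it and keep the second-order computation as a check.

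The main obstacle is the first step: the passage $\varepsilon\downarrow0$ on both sides. On the KL side it is Lemma~\ref{lem:static}(ii). On the $J_\theta$ side the required continuity of the fixed-point value at a degenerate initial law must be extracted from the Gaussian Riccati reduction. I would handle this by writing $J_\theta(\cN(0,\varepsilon)\to\cN(m,s^2))$ explicitly via the scalar shooting equation and noting that all its ingredients are continuous in $\sigma(0)^2=\varepsilon$ down to $0$.
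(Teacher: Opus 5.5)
Your proposal is correct and is essentially the paper's proof: smooth $\delta_0$ by $\cN(0,\varepsilon)$, pass to the limit to force $v_{\theta_1}-v_{\theta_2}$ to be constant, and contradict this with $v_\theta(m,1)=\frac{1+\theta}{2}m^2$; in fact your opening subtraction $J_{\theta_1}-J_{\theta_2}\equiv C_{\theta_1}-C_{\theta_2}$ already makes Lemma~\ref{lem:static}, and with it every hypothesis on $(h,U)$, superfluous, since only the Riccati-side limit $\varepsilon\downarrow0$ is then needed. One slip in your backup check: $\frac{1}{2\alpha}(\alpha-1)(-\alpha)=\frac{1-\alpha}{2}$, so $\partial^2_{s^2}v_\theta=\frac{1+\theta}{2}$ at $(m,s^2)=(0,1)$, not $\frac{1-\theta}{2}$, but this is still $\theta$-dependent, so the contradiction survives.
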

	
	\begin{proof}
		Suppose a fixed pair $(h,U)$ satisfies the identity for
		$\theta_1\neq\theta_2$; write $R=P^{h,U}$. Apply the identity with
		$\mu_0^\varepsilon=\cN(0,\varepsilon)$ and two Gaussian targets, subtract,
		and let $\varepsilon\downarrow0$. On the left the Riccati formulas of
		Lemma~\ref{lem:value-pm} are continuous in the initial variance down to
		$\sigma(0)=0$; on the right Lemma~\ref{lem:static} applies, since the
		endpoint kernel of $P^{h,U}$ is endpoint-regular (bounded $h$ by Aronson's
		estimate~\cite{aronson1967}; linear $h$ explicitly Gaussian; general
		linear-growth Lipschitz $h$ by Girsanov comparison against the driftless
		kernel and the Cauchy--Schwarz inequality on a finite horizon). Hence, with
		$\rho_0(\cdot):=R(X_1\in\cdot\mid X_0=0)$,
		$v_{\theta_i}(m,s^2)=\KL(\cN(m,s^2)\,\|\,\rho_0)+C_{\theta_i}$ for $i=1,2$
		and all $m\in\RR$, $s^2>0$. Subtracting, $v_{\theta_1}-v_{\theta_2}$ must
		be the constant $C_{\theta_1}-C_{\theta_2}$. But at $s^2=1$ (so $a_1=0$, $M=B=1$),
		\eqref{eq:value-pm} gives
		$v_\theta(m,1)=m^2-\frac{\alpha}{2}m^2=\frac{1+\theta}{2}m^2$, hence
		$v_{\theta_1}(m,1)-v_{\theta_2}(m,1)=\tfrac12(\theta_1-\theta_2)m^2$, which
		is not constant in $m$, a contradiction.
	\end{proof}
	
	Thus no fixed $(h,U)$ generates the family $\{J_\theta\}_{\theta\in[0,1)}$ the
	way the ordinary bridge is a projection onto the single reference $P^V$: the
	risk-sensitive bridge is a one-parameter family of different control problems, not a re-weighting of
	one reference measure.
	
	\subsection{No fixed reference works at all}
	\begin{proposition}[No endpoint-regular reference]
		\label{prop:no-kl}
		Let $d=1$, $f\equiv0$, $\kappa=1$, $V\equiv0$, and let $\theta\in(0,1)$.
		There exists no endpoint-regular probability measure $R_\theta$ on
		$C([0,1];\RR)$ and no map $\mu_0\mapsto C(\mu_0)$ such that
		$J_\theta(\mu_0\to\mu_1)=\inf_{Q:\,Q_0=\mu_0,\,Q_1=\mu_1}\KL(Q\|R_\theta)+C(\mu_0)$
		for all absolutely continuous marginals $\mu_0,\mu_1$ with full support and
		finite second moment.
	\end{proposition}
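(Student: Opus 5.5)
We argue by contradiction, following the template of Proposition~\ref{prop:no-uniform}, but now extracting a differential identity in the target parameters $(m,s^2)$ rather than comparing two values of $\theta$. Suppose $R_\theta$ and $C(\cdot)$ exist. We apply the assumed identity with $\mu_0^\varepsilon=\cN(0,\varepsilon)$ and two Gaussian targets $\cN(m,s^2)$ and $\cN(m_*,s_*^2)$, and subtract. The additive term $C(\mu_0^\varepsilon)$ cancels, which is exactly why dependence on $\mu_0$ alone is harmless. We then let $\varepsilon\downarrow0$. On the left, the Riccati formulas underlying Lemma~\ref{lem:value-pm} are continuous in the initial variance down to $\sigma(0)=0$. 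On the right, Lemma~\ref{lem:static}(ii) applies, because Gaussian targets are absolutely continuous with finite second moment and entropy. With $\rho_0:=R_\theta(X_1\in\cdot\mid X_0=0)$ this yields
\[
v_\theta(m,s^2)=\KL\bigl(\cN(m,s^2)\,\|\,\rho_0\bigr)+\text{const}\qquad\text{for all } m\in\RR,\ s^2>0 .
\]

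Next we expose the heat-equation structure. Write $g:=-\log\rho_0$. Endpoint regularity \eqref{eq:endpoint-regular} at $x=0$ gives $\rho_0>0$, continuous, and $g$ of at most quadratic growth. We expand
\[
\KL\bigl(\cN(m,s^2)\,\|\,\rho_0\bigr)=-\tfrac12\log(2\pi e s^2)+G(m,s^2),\qquad G(m,s^2):=\EE\,g(m+sZ),\quad Z\sim\cN(0,1).
\]
The function $G$ is the Gaussian smoothing of $g$, i.e.\ the convolution of $g$ with the heat kernel at time $s^2$. The heat kernel satisfies $\partial_{s^2}k=\tfrac12\partial_m^2k$, and the quadratic growth of $g$ is integrable against it. Hence differentiation under the integral is justified for $s^2>0$, and $G$ is smooth there with
\[
\partial_{s^2}G=\tfrac12\,\partial_m^2G .
\]
Consequently $w(m,s^2):=v_\theta(m,s^2)+\tfrac12\log s^2$ must satisfy $\partial_{s^2}w=\tfrac12\partial_m^2 w$ identically. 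Lemma~\ref{lem:value-pm} makes $v_\theta$, and hence $w$, $C^1$ in $s^2$ near $s^2=1$, so both sides may be evaluated at the point $(m,s^2)=(0,1)$, where $a_1=0$ and $M=B=1$.

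We compute both sides from \eqref{eq:value-pm}.
\begin{itemize}
\item[\emph{Spatial side.}] The $m$-dependence of $v_\theta$ is a quadratic $c(a_1)\,m^2$ with $c=-\frac{a_1}{2\alpha}+\frac1M-\frac{\alpha B}{2M^2}$. At $a_1=0$ this gives $c=1-\frac{\alpha}{2}=\frac{1+\theta}{2}$, consistent with the proof of Proposition~\ref{prop:no-uniform}. Hence $\tfrac12\partial_m^2w=\frac{1+\theta}{2}$.
\item[\emph{Variance side.}] At $m=0$,
\[
\partial_{s^2}w=\Bigl[\frac{1}{2\alpha(1+a_1)}-\frac{s^2}{2\alpha}\Bigr]\frac{da_1}{ds^2}-\frac{a_1}{2\alpha}+\frac{1}{2s^2}.
\]
The bracket vanishes at $(a_1,s^2)=(0,1)$, so the finite derivative $da_1/ds^2=-\alpha$ from Lemma~\ref{lem:value-pm} plays no role. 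Hence $\partial_{s^2}w=\tfrac12$.
\end{itemize}
The defect is therefore $\tfrac12\partial_m^2w-\partial_{s^2}w=\theta/2\neq0$ for $\theta\in(0,1)$. This contradicts the heat equation and proves the proposition; at $\theta=0$ the defect vanishes, as it must.

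The main obstacle is the limiting step. We must check that Lemma~\ref{lem:static}(ii) delivers the \emph{pointwise} identity for every Gaussian target, with no exceptional set. We must also confirm that the heat equation for $G$ holds in the classical sense under only the two-sided bounds \eqref{eq:endpoint-regular}, which give $\rho_0$ just continuous with quadratic control of $\log\rho_0$. The plan is to handle the latter entirely on the smoothed side. Differentiating the Gaussian kernel rather than $g$ means that only integrability of $|g|$ against Gaussians is needed, and this follows from $|g(y)|\le C(1+y^2)$.
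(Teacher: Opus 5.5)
Your proposal is correct and follows the paper's proof essentially step for step. It uses the same point-mass test with $\mu_0^\varepsilon=\cN(0,\varepsilon)$, in which $C(\mu_0)$ cancels, and the same heat-equation criterion for Gaussian smoothings; your $w=v_\theta+\tfrac12\log s^2$ is, up to an additive constant, the negative of the paper's $L_\theta$. It ends with the same evaluation at $(m,s^2)=(0,1)$, producing the defect $\theta/2$. Your two extra remarks fill in details the paper leaves implicit: the two-sided bounds give $|\log\rho_0(y)|\le C(1+y^2)$, so the derivatives can be placed on the heat kernel, and the vanishing bracket means only the finiteness of $da_1/ds^2$ is needed.
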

	
	\begin{proof}
		Suppose $(R_\theta,C)$ satisfies the identity. {Step 1: point-mass
			test.} Apply the identity with $\mu_0^\varepsilon=\cN(0,\varepsilon)$ and
		two Gaussian targets, and subtract: $C$ cancels since it depends only on
		$\mu_0$. Let $\varepsilon\downarrow0$. On the left, the Riccati formulas
		are continuous down to $\sigma(0)=0$; on the right,
		Lemma~\ref{lem:static}(ii) applies with
		$\rho_0(\cdot):=R_\theta(X_1\in\cdot\mid X_0=0)$, giving
		\begin{equation}
			\label{eq:kl-form}
			v_\theta(m,s^2)=\KL\bigl(\cN(m,s^2)\,\|\,\rho_0\bigr)+\mathrm{const}
		\end{equation}
		for all $m\in\RR$, $s^2>0$, with the constant independent of $(m,s^2)$. {Step 2: a deconvolution
			criterion.} Since
		$\KL(\cN(m,s^2)\|\rho)=-\tfrac12\log(2\pi s^2)-\tfrac12
		-\EE_{Y\sim\cN(m,s^2)}[\log\rho(Y)]$, the identity \eqref{eq:kl-form} says
		that
		$L_\theta(m,s^2):=-v_\theta(m,s^2)-\tfrac12\log(2\pi s^2)-\tfrac12$
		is, up to a constant, the Gaussian smoothing at bandwidth $s^2$ of the
		fixed function $\log\rho_0$. Gaussian smoothing obeys the heat
		equation in $(m,s^2)$:
		\begin{equation}
			\label{eq:heat}
			\partial_{s^2}L_\theta=\tfrac12\,\partial_m^2 L_\theta
			\qquad\text{for all }m\in\RR,\ s^2>0,
		\end{equation}
		a necessary condition on the value $v_\theta$ alone. {Step 3: explicit
			failure at $s^2=1$.} At $s^2=1$ one has $a_1=0$, $M=B=1$, so
		$v_\theta(m,1)=\frac{1+\theta}{2}m^2$ and $\tfrac12\partial_m^2
		L_\theta(m,1)=-\tfrac12\partial_m^2 v_\theta(m,1)=-\frac{1+\theta}{2}$.
		On the other hand $v_\theta(0,s^2)=\frac{1}{2\alpha}\log(1+a_1)-\frac{a_1
			s^2}{2\alpha}$, and differentiating \eqref{eq:shoot-pm} at $a_1=0$ gives
		$\frac{da_1}{ds^2}\big|_{s^2=1}=-\alpha$; the two terms of
		$\partial_{s^2}v_\theta(0,s^2)$ then cancel at $s^2=1$, giving
		$\partial_{s^2}L_\theta|_{(0,1)}=-\partial_{s^2}v_\theta|_{(0,1)}
		-\tfrac12=-\frac12$. The heat equation \eqref{eq:heat} would require
		$-\tfrac12=-\tfrac{1+\theta}{2}$, which fails for every $\theta>0$: the
		heat-equation defect is exactly $\theta/2$. Hence no density $\rho_0$, and
		{a fortiori} no reference $R_\theta$, can satisfy \eqref{eq:kl-form}.
		At $\theta=0$ the two sides agree and \eqref{eq:kl-form} holds with
		$\rho_0=\cN(0,1)$, the Brownian endpoint law: the classical KL duality, recovered as a sanity check.
	\end{proof}
	
	\begin{remark}[An inf--sup in place of the KL dual]
		\label{rem:dv}
		The structural reason is visible in the Donsker--Varadhan
		formula~\cite{donsker1975,follmerschied2016}: the entropic risk is itself
		a Legendre object, so $J_\theta$ is an inf--sup of KL-type objects (classically, a game against a drift-perturbing adversary~\cite{jacobson1973,basar1995}), whereas the standard
		bridge is a single inf; a projection would force the inner
		supremum to collapse, which happens only at $\theta=0$.
	\end{remark}
	
	\section{The Gaussian case: closed-form Riccati reduction}
	\label{sec:gaussian}
	
	Specialise to $f_t=0$, $\kappa_t=\kappa>0$, $V(x)=\frac{k}{2}\|x\|^2$
	($k\ge0$), $\mu_0=\cN(0,I_d)$, $\mu_1=\cN(m_1,\Sigma_1)$ with
	$\Sigma_1\in\mathbb{S}_{++}^d$, and define
	$\gamma:=\sqrt{\alpha k/\kappa}$, $\omega:=\sqrt{\alpha k\kappa}$. Posit the
	log-quadratic ansatz
	$\psi(t,x)=\exp(-\frac12 x^\top A(t)x+b(t)^\top x+c(t))$,
	$\varphi(t,x)=\exp(-\frac12 x^\top D(t)x+E(t)^\top x+F(t))$ with
	$A,D\in\mathbb{S}^d$. The optimally controlled marginal is then Gaussian
	with precision $\Sigma(t)^{-1}=D(t)+A(t)/\alpha$; positivity is required only
	of this sum.
	Inserting the ansatz into \eqref{eq:psi-rs} gives the backward Riccati system
	\begin{equation}
		\label{eq:riccati-A}
		\dot A=\kappa A^2-\alpha k I,\qquad
		\dot b=\kappa Ab,
	\end{equation}
	with $\dot c=\frac{\kappa}{2}(\operatorname{tr}A-\|b\|^2)$.
	and since $\|\nabla\log\psi\|^2$ is quadratic, the effective potential
	\eqref{eq:Veff} remains quadratic: the ansatz is self-consistent. The forward
	equation \eqref{eq:G-fwd} yields
	\begin{equation}
		\label{eq:riccati-D}
		\dot D=-\kappa D^2+kI+\frac{\kappa\theta}{\alpha^2}A^2,\qquad
		\dot E=-\kappa DE+\frac{\kappa\theta}{\alpha^2}Ab,
	\end{equation}
	and in the total precision the $kI$ terms cancel:
	\begin{equation}
		\label{eq:riccati-Lambda}
		\frac{d}{dt}\Sigma(t)^{-1}=-\kappa D^2+\frac{\kappa}{\alpha^2}A^2,
	\end{equation}
	a closed system for $(A,\Sigma^{-1})$, which recovers
	the covariance equation
	\begin{equation}
		\label{eq:cov-ode}
		\dot\Sigma=\kappa I-\frac{\kappa}{\alpha}\bigl(A\Sigma+\Sigma A\bigr),
	\end{equation}
	exactly the moment evolution of the controlled Ornstein--Uhlenbeck SDE with
	drift $-\frac{\kappa}{\alpha}(A(t)x-b(t))$.
	
	In the eigenbasis of $\Sigma_1$, $A(t)$ and $\Sigma(t)$ are simultaneously
	diagonal; each diagonal entry satisfies the scalar system
	\begin{equation}
		\label{eq:scalar}
		\dot a_i=\kappa a_i^2-\alpha k,\qquad
		\dot\sigma_i=\kappa-\frac{2\kappa}{\alpha}a_i\sigma_i,
	\end{equation}
	with $\sigma_i(0)=1$, $\sigma_i(1)=\ell_i$. The backward equation has an
	explicit hyperbolic solution, and the terminal condition reduces to the scalar
	shooting equation
	\begin{equation}
		\label{eq:shoot}
		\ell_i=\gamma^{2/\alpha}\left(g_i(1)^{-2/\alpha}
		+\kappa\int_0^1 g_i(\tau)^{-2/\alpha}\,d\tau\right),
	\end{equation}
	with $g_i(\tau)=\gamma\cosh(\omega\tau)+a_{i,1}\sinh(\omega\tau)$ and
	$a_{i,1}:=a_i(1)$: one transcendental equation per coordinate, completely
	decoupled. The mean condition is then a single affine equation per coordinate
	for $b_{i,1}$.
	
	\begin{remark}[Well-posedness; Brownian limit]
		\label{rem:shoot-wp}
		The map $a_{i,1}\mapsto\ell_i$ in \eqref{eq:shoot} is strictly decreasing (for $\tau>0$, $g_i(\tau)$ is strictly increasing in $a_{i,1}$) and surjective onto $(0,\infty)$, so a unique root exists for
		every $\ell_i>0$. This
		proves existence and uniqueness of the terminal-multiplier coset of
		Definition~\ref{def:bridge-value} in the Gaussian setting. In the Brownian
		case $k=0$ the hyperbolic formula degenerates to a rational one, and
		passing $\sigma_i(0)\downarrow0$ recovers the point-mass map
		\eqref{eq:shoot-pm} used in Section~\ref{sec:no-kl}.
	\end{remark}
	
	\paragraph*{The obstruction, in factor language}
	In the same Brownian setting, write the terminal forward factor as
	$\varphi(1,x)\propto\exp(-\tfrac12 D(1)x^2+E(1)x)$. Precision addition in the
	coupling \eqref{eq:Psi1} and the shooting map \eqref{eq:shoot-pm} give
	$D(1)=\frac{1}{s^2}-\frac{a_1}{\alpha}
	=1+\frac{\theta}{3(1-\theta)^2}a_1^2+O(a_1^3)$. For $\theta>0$ the terminal
	precision from a point mass varies with the target variance near $s^2=1$,
	whereas a fixed reference forces target-independent endpoint bridges
	($\theta=0$: $D(1)\equiv1$). This is the factor-level form of the value-level
	defect of Proposition~\ref{prop:no-kl}.

	\begin{figure}[t]
		\centering
		\includegraphics[width=\columnwidth]{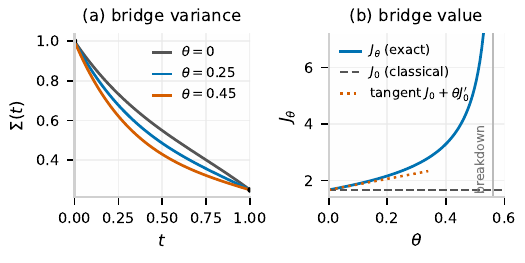}
		\caption{Gaussian bridge with $\kappa=k=1$, $\mu_0=\cN(0,1)$,
			$\mu_1=\cN(1,1/4)$, computed from \eqref{eq:scalar}--\eqref{eq:shoot}.
			(a) Marginal variance $\Sigma(t)$: the endpoint constraints are met
			exactly for every $\theta$, while the intermediate variance contracts as
			$\theta$ grows. (b) Bridge value $J_\theta$: it leaves the classical
			value $J_0$ with the first-order slope of \eqref{eq:first-order-value} and
			diverges at a finite $\theta$, the classical breakdown of the entropic
			criterion~\cite{whittle1990}.}
		\label{fig:gaussian}
	\end{figure}
	
	\paragraph*{Numerical illustration}
	Figure~\ref{fig:gaussian} shows the one-dimensional case $\kappa=k=1$,
	$\mu_0=\cN(0,1)$, $\mu_1=\cN(1,1/4)$, solved through the scalar system
	\eqref{eq:scalar}--\eqref{eq:shoot}. The terminal constraint is met exactly
	at every $\theta$, while the intermediate variance contracts as $\theta$
	grows. The bridge
	value leaves the classical value with the first-order slope of
	\eqref{eq:first-order-value}, then grows superlinearly and diverges at a finite
	$\theta$: for an unbounded initial law the entropic criterion has a finite
	breakdown point, beyond which no finite-cost bridge exists. The breakdown
	attaches to the unbounded initial law rather than to the construction: the
	state-conditioned soft value $S_\theta(t,x)$ and the point-mass case of
	Lemma~\ref{lem:value-pm} remain finite for every $\theta<1$.
	
	\section{What replaces the projection}
	\label{sec:replacement}
	
	The failure of the KL projection is not the failure of structure. For
	$\theta>0$ the fixed-point bridge retains the generalised Cole--Hopf
	linearisation \eqref{eq:ch-rs}--\eqref{eq:psi-rs}, whose backward generator is
	$\theta$-independent, and the asymmetric factorisation
	$p^*=\varphi\,\psi^{1/(1-\theta)}$ under the effective potential
	\eqref{eq:Veff}; the value remains an inf--sup of KL-type objects
	(Remark~\ref{rem:dv}). What is lost is only the single fixed reference and the
	projection geometry it induces. The mechanism is visible at first
	order in $\theta$: expanding
	$\psi_\theta=\psi^{(0)}+\theta\psi^{(1)}+O(\theta^2)$,
	$\varphi_\theta=\varphi^{(0)}+\theta\varphi^{(1)}+O(\theta^2)$ around the standard
	bridge pair and matching the $O(\theta)$ terms of \eqref{eq:psi-rs} and
	\eqref{eq:G-fwd} gives linear equations with explicit sources: $\psi^{(1)}$
	solves the standard backward PDE with source $V\psi^{(0)}$, and $\varphi^{(1)}$
	solves the standard forward PDE with the negative source
	$-\frac{\kappa_t}{2}\|\nabla\log\psi^{(0)}\|^2\,\varphi^{(0)}$.
	
	\begin{proposition}[First-order density correction]
		\label{prop:first-order}
		The optimally controlled density expands as
		\begin{equation}
			\label{eq:p1}
			\begin{split}
				p_\theta^*(t,x)&=\varphi^{(0)}\psi^{(0)}
				+\theta\bigl[\varphi^{(0)}\psi^{(0)}\log\psi^{(0)}\\
				&\qquad+\varphi^{(0)}\psi^{(1)}+\varphi^{(1)}\psi^{(0)}\bigr]
				+O(\theta^2),
			\end{split}
		\end{equation}
		and, writing $p_0:=\varphi^{(0)}\psi^{(0)}$ and $\chi:=\psi^{(1)}/\psi^{(0)}$, the
		correction $p_1$ satisfies the linearised Fokker--Planck equation
		\begin{equation}
			\label{eq:p1-cons}
			\begin{split}
				\partial_t p_1&=-\nabla\cdot\bigl(\bigl(f+\kappa_t\nabla\log\psi^{(0)}\bigr)p_1\bigr)
				+\tfrac{\kappa_t}{2}\Delta p_1\\
				&\quad-\kappa_t\nabla\cdot\bigl(\bigl(\nabla\log\psi^{(0)}+\nabla\chi\bigr)p_0\bigr),
			\end{split}
		\end{equation}
		with $\int p_1(t,x)\,dx=0$ for all $t$.
	\end{proposition}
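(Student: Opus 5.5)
The plan is to obtain both claims by expanding in $\theta$ around the standard bridge pair. The density formula \eqref{eq:p1} will come from expanding the asymmetric factorisation \eqref{eq:factor-rs}. The evolution equation \eqref{eq:p1-cons} will come from differentiating the Fokker--Planck equation that $p_\theta^*$ satisfies under the feedback \eqref{eq:u-star}. Working from the Fokker--Planck equation avoids recombining the $O(\theta)$ equations for $\psi^{(1)}$ and $\varphi^{(1)}$ by hand, which would only be a consistency check.

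\emph{Step 1: the product formula.}
\begin{itemize}
\item By Proposition~\ref{prop:decouple}, $p_\theta^*=\varphi_\theta\,\psi_\theta^{1/\alpha}$ with $1/\alpha=1+\theta+O(\theta^2)$.
\item Write $\psi_\theta^{1/\alpha}=\psi_\theta\exp\bigl((1/\alpha-1)\log\psi_\theta\bigr)=\psi_\theta\bigl(1+\theta\log\psi^{(0)}+O(\theta^2)\bigr)$.
\item Inserting $\psi_\theta=\psi^{(0)}+\theta\psi^{(1)}+O(\theta^2)$ gives $\psi_\theta^{1/\alpha}=\psi^{(0)}+\theta\bigl(\psi^{(1)}+\psi^{(0)}\log\psi^{(0)}\bigr)+O(\theta^2)$.
\item Multiplying by $\varphi_\theta=\varphi^{(0)}+\theta\varphi^{(1)}+O(\theta^2)$ and collecting the $O(\theta)$ terms yields \eqref{eq:p1} directly.
\end{itemize}
Here $\psi_\theta$ is understood to include the $\theta$-dependence of the terminal multiplier $\lambda^*_\theta$. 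Accordingly, $\psi^{(1)}$ carries both the source $V\psi^{(0)}$ and the first-order variation of the terminal datum $e^{-\alpha\lambda^*_\theta}$.

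\emph{Step 2: the linearised Fokker--Planck equation.} The density $p_\theta^*$ solves
\[
\partial_t p=-\nabla\cdot(b_\theta p)+\tfrac{\kappa_t}{2}\Delta p,
\qquad
b_\theta=f+\tfrac{\kappa_t}{\alpha}\nabla\log\psi_\theta .
\]
Since $\nabla\log\psi_\theta=\nabla\log\psi^{(0)}+\theta\nabla\chi+O(\theta^2)$ with $\chi=\psi^{(1)}/\psi^{(0)}$, the drift expands as
\[
b_\theta=\bigl(f+\kappa_t\nabla\log\psi^{(0)}\bigr)+\theta\,\kappa_t\bigl(\nabla\log\psi^{(0)}+\nabla\chi\bigr)+O(\theta^2).
\]
The first term of the $O(\theta)$ correction comes from $1/\alpha=1+\theta+\dots$, and the second from the perturbation of $\psi$. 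Substitute $p=p_0+\theta p_1+O(\theta^2)$ and match the $O(\theta)$ coefficients. The bilinear product $b_\theta p$ then contributes exactly the transport of $p_1$ by the unperturbed drift plus the source $-\kappa_t\nabla\cdot\bigl((\nabla\log\psi^{(0)}+\nabla\chi)p_0\bigr)$, which is \eqref{eq:p1-cons}.

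\emph{Step 3: mass conservation.}
\begin{itemize}
\item The initial law is $\mu_0$ for every $\theta$, so $p_1(0,\cdot)=0$.
\item Every term on the right of \eqref{eq:p1-cons} is a divergence. Integrating in $x$ then gives $\frac{d}{dt}\int p_1\,dx=0$, provided the fluxes decay at infinity.
\item This yields $\int p_1(t,x)\,dx=0$ for all $t$.
\end{itemize}

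\emph{Main obstacle.} The hard step is justifying the expansion itself: differentiability of $\theta\mapsto(\psi_\theta,\varphi_\theta,p_\theta^*)$ with $O(\theta^2)$ remainders, uniformly enough to pass derivatives through $\nabla$, $\Delta$ and the spatial integral. This in turn requires $\theta\mapsto\lambda^*_\theta$ to be differentiable.
\begin{itemize}
\item In the Gaussian setting I would obtain differentiability of $\lambda^*_\theta$ from the implicit function theorem applied to the shooting map \eqref{eq:shoot}, whose derivative in $a_{i,1}$ is nonvanishing by Remark~\ref{rem:shoot-wp}. Smooth dependence of the Riccati solutions \eqref{eq:scalar} on parameters then gives explicit Gaussian bounds on the remainders.
\item In general I would stay within the paper's standing formal framework of classical positive solutions under parabolic regularity. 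There I would assume smooth dependence of the multiplier coset on $\theta$, plus enough decay of $p_0\nabla\log\psi^{(0)}$ and $p_0\nabla\chi$ for the boundary terms in Step 3 to vanish.
\end{itemize}
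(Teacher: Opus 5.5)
Your proposal is correct and follows the paper's route: you obtain \eqref{eq:p1} by expanding the factorisation $p_\theta^*=\varphi_\theta\,\psi_\theta^{1/\alpha}$, and \eqref{eq:p1-cons} by expanding the drift $f+\frac{\kappa_t}{\alpha}\nabla\log\psi_\theta$ and matching the $O(\theta)$ terms of the exact Fokker--Planck equation. The only variation is the mass constraint: the paper differentiates $\int p_\theta^*\,dx=1$ at $\theta=0$, while you use $p_1(0,\cdot)=0$ together with the divergence form of \eqref{eq:p1-cons}, and both are valid at the paper's formal level of rigour.
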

	
	\begin{proof}
		With
		$\nabla\log\psi_\theta=\nabla\log\psi^{(0)}+\theta\nabla\chi+O(\theta^2)$,
		matching the $O(\theta)$ terms of the exact Fokker--Planck equation gives
		\eqref{eq:p1-cons}; the integral constraint follows by differentiating
		$\int p_\theta^*\,dx=1$ at $\theta=0$.
	\end{proof}
	
	The first-order effect of risk sensitivity is therefore exactly localized:
	forward mass is depleted where the {score energy}
	$\mathcal I[\psi^{(0)}]:=\|\nabla\log\psi^{(0)}\|^2$ of the standard bridge is large (near saddles, ridges, or the ambiguous zone between modes of a mixture target), so the ensemble commits to a basin earlier and crosses
	transition zones faster: a design handle on which trajectories a generative
	sampler avoids.
	
	\section{Conclusion}
	\label{sec:conclusion}
	
	We have constructed a risk-sensitive deformation of the Schr\"odinger bridge
	that preserves exact terminal matching and the linearising Cole--Hopf
	structure, and proved that its value admits no fixed-reference KL dual: the
	obstruction is an explicit $\theta/2$ heat-equation defect, ruling out every
	endpoint-regular reference; per-$\theta$ representations are circular, and no
	single reference serves two distinct $\theta$. Open problems include a general existence theory,
	contraction of the fixed-point iteration for $\theta>0$, the limit
	$\theta\to1$, and a duality theory for $\cJ_\theta$.
	
	\section{Acknowledgments}
	The author used large language models to assist with editing, code
	refactoring, and the organization of computational experiments. All mathematical derivations, scientific claims, and final code were independently checked, validated, and approved by the author.
	
	\bibliographystyle{IEEEtran}
	\bibliography{references}
	
\end{document}